\newtheorem{theorem}{Theorem}
\DeclareMathOperator{\Tr}{\mathrm{Tr}}%trace2
\DeclareMathOperator{\tracedistance}{\mathcal{D}}
\DeclareMathOperator{\ee}{\mathrm{e}}%Euler e
\DeclareMathOperator{\iu}{\mathrm{i}}%imaginary unit
\DeclareMathOperator{\hiH}{\mathcal{H}}%Hilbert spaces
\DeclareMathOperator{\haH}{\mathscr{H}}%Hamiltonians
\DeclareMathOperator{\id}{\mathds{1}}%identity matrix
\newcommand{\deff}{d^{\mathrm{eff}}}%effective dimension
\newcommand{\bra}[1]{\langle #1|}
\newcommand{\ket}[1]{|#1\rangle}
\newcommand{\braket}[2]{\langle #1|#2\rangle}
\newcommand{\ketbra}[2]{| #1 \rangle \langle #2 |}
\newcommand{\basisentanglement}{R}
\newcommand{\eqcoef}{\ensuremath{{C_{\mathrm{eq}}}}}
\begin{document}
\title{Absence of Thermalization in Nonintegrable Systems}
\author{Christian Gogolin$^{1,2,3}$, Markus P.\ M{\"u}ller$^{1,4}$, and Jens Eisert$^{1,5}$}

\affiliation{$^1$ Institute for Physics and Astronomy, Potsdam University, 14476 Potsdam, Germany\\
$^2$ Fakult{\"{a}}t f\"{u}r Physik und Astronomie, Universit\"{a}t W\"{u}rzburg, Am Hubland, 97074 W\"{u}rzburg, Germany\\
$^3$ Department of Mathematics, University of Bristol, University Walk, Bristol BS8 1TW, United Kingdom\\
$^4$ Institute of Mathematics, Technical University of Berlin, 10623 Berlin, Germany\\
$^5$ Institute for Advanced Study Berlin, 14193 Berlin, Germany}

%\author{Christian Gogolin}
%\affiliation{\bristol}
%\affiliation{\potsdam}
%\affiliation{\wuerzburg}
%\author{Markus M\"{u}ller}
%\affiliation{\berlin}
%\affiliation{\potsdam}
%\author{Jens Eisert}
%\affiliation{\wiko}
%\affiliation{\potsdam}

\begin{abstract}
We establish a link between unitary relaxation dynamics after a quench in closed many-body systems and the entanglement in the energy eigenbasis.
We find that even if reduced states equilibrate, they can have memory on the initial conditions even in certain models that are far from integrable.
We show that in such situations the equilibrium states are still described by a maximum entropy or generalized Gibbs ensemble, regardless of whether a model is integrable or not, thereby contributing to a recent debate.
In addition, we discuss individual aspects of the thermalization process, comment on the role of Anderson localization, and collect and compare different notions of integrability.
\end{abstract}

\pacs{05.30.-d, 03.65.-w, 03.65.Yz, 05.70.Ln}
% Explanation of PACS numbers:
% 03.65.-w      Quantum mechanics
% 05.30.-d      Quantum statistical mechanics
% 03.65.Yz      Decoherence; open systems; quantum statistical methods
% 05.70.Ln      Irreversible thermodynamics
\keywords{Open quantum systems, thermalization, equilibration, quantum many-body systems, nonintegrable models, generalized Gibbs ensemble}

\maketitle

The question of how quantum many-body systems in nonequilibrium eventually equilibrate and assume properties resembling the ones familiar from statistical mechanics has---quite unsurprisingly---a very long tradition \cite{SchroedingerNeumann}.
In closed systems not all observables can equilibrate.
However, it is generally expected that in sufficiently complicated quantum many-body systems at least some physically relevant quantities should seemingly relax to equilibrium values.
Recently this old question has received an enormous amount of attention and there have been significant new insights \cite{EigenstateThermalization,Cramer,PRE81,Linden09,Popescu06,Reimann08,Pal10,Goldstein06,%Deutsch10,
tasaki98,PRE82,Gogolin10,Gemmer09,Cazalilla10,Barthel,Calabrese,Rigol07,Rigol08}.

This renewed attention is partly driven by new mathematical methods becoming available, partly by novel numerical techniques, and in parts by experiments that make it possible to probe coherent nonequilibrium dynamics under the controlled conditions offered by cold atoms in optical lattices \cite{Bloch}. 
Theoretically, among other approaches, the question of how quantum many-body systems relax locally has been investigated in the light of the  ``eigenstate thermalization hypothesis'' (ETH) \cite{EigenstateThermalization,Rigol08}, quantum central limit theorems \cite{Cramer}, Anderson localization \cite{Pal10}, dynamical instances of concentration of measure arguments or ideas of relaxation via dephasing \cite{Linden09,Popescu06,Reimann08,PRE82,PRE81,Gogolin10,Gemmer09}, and numerically using time-dependent density-matrix renormalization group (DMRG) \cite{kollschollmanm}.
Despite this enormous effort, major questions remain open and the existing results do not yet draw a coherent picture.
What seems to have become consensus, however,  is that the following expectation holds true: Nonintegrable systems thermalize.

In this letter we show that generally, this is not quite true.
We do so by establishing a link between the \emph{entanglement in the eigenbasis} of a quantum many-body system with what could be called the \emph{thermalization potential} of the system.
We will investigate situations in which systems \emph{equilibrate}, in the sense that all local observables will be close to some equilibrium value at most times, but those values turn out to depend on the details of the initial state.
This general rigorous statement is exemplified numerically by studying a small natural nonintegrable XYZ-type spin chain model.
In previous approaches (e.g., in Ref.~\cite{BanulsCiracHastings10}), similar complementing observations have been made by simulating the model's time evolution explicitly.
However, such simulations can only trace the system's behavior for a finite amount of time and become unreliable for long times.
Our analytic results have applications far beyond this particular model: they yield general conditions for the absence of thermalization.
This gives a natural counterpart of the ETH, and it relates the thermalization of isolated quantum systems to the presence of entanglement in the energy eigenbasis.

\begin{figure}[t]
 \centering
 \includegraphics[width=0.9\linewidth]{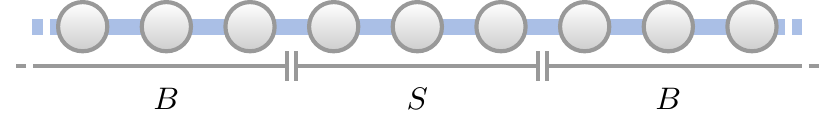}
 \caption{(color online)
   A many-body quantum system consisting of a small subsystem $S$, and the remainder $B$, which acts as a ``bath.''
 }
 \label{fig:setup}
\end{figure}

\paragraph{Setup and notation.} 
When using terms and concepts borrowed from classical statistical mechanics, such as ergodicity, equilibration, thermalization, initial state independence, and integrability, we aim at being careful and precise whenever referring to one of these terms.
We work in the pure state quantum statistical mechanics model with a system and bath setup with a global pure state and unitary time evolution \cite{Cramer,Linden09,Popescu06,Gogolin10,PRE81,PRE82}.
We are mostly interested in the case where the full system is composed of many interacting small systems and the subsystem corresponds to a small subset of sites and the bath is simply the remainder.% (see Fig.~\ref{fig:setup}).
In such systems the individual subsystems act as baths for each other and the collective dynamics can lead to self-thermalization of the whole system.
To be specific, we will consider arbitrary quantum systems equipped with a Hilbert space $\hiH$ of finite dimension $d$ that can be divided into at least two parts, i.e., $\hiH = \hiH_S \otimes \hiH_B$, which we will call the subsystem $S$ and the bath $B$, and which are described by Hilbert spaces of dimensions $d_{S/B} = \dim(\hiH_{S/B})$.
We assume that at every time $t$ the joint system is in a pure state $\psi_t = \ketbra{\psi_t}{\psi_t}$, evolving unitarily.
The reduced states on subsystem and bath are denoted using superscript letters, such as $\psi^S =\Tr_B[\psi]$.% and $\psi^B =\Tr_S[\psi]$.
We denote the Hamiltonian of the full system by $\haH$ and its eigenvectors and eigenvalues by $\ket{E_k}$ and $E_k$, $k=1,\dots,d$ (regardless of degeneracies).

\paragraph{Thermalization.}
Thermalization is a complicated process.
For it to happen a system must exhibit certain properties, each of which captures a specific aspect of thermalization.
It is instructive to consider each of them separately \cite{Linden09,PRE82}:

1) {\it Equilibration:}
The tendency to evolve towards equilibrium is a key assumption in classical statistical physics and part of the second law of thermodynamics.
In contrast to that, in the framework of pure state quantum statistical mechanics, equilibration for almost all times can instead be proven to be a consequence of unitary time evolution \cite{Linden09}.

The remaining conditions specify what properties the equilibrium state of a small subsystem should have and of which of the initial conditions it should be independent.

2) {\it Subsystem initial state independence:}
The equilibrium state of a small subsystem should be independent of the initial state of that subsystem.
This aspect of thermalization will be the main subject of the present work.

3) {\it Diagonal form of the subsystem equilibrium state:}
The equilibrium state of a small subsystem should be (close to) diagonal in the energy eigenbasis of its self-Hamiltonian \cite{PRE81}.

4) {\it Bath state independence:}
It is generally expected that the equilibrium expectation values of local observables on a small subsystem are almost independent of the details of the initial state of the rest of the system, but rather only depend on its macroscopic properties, such as the energy density.

5) {\it Gibbs state:}
Ultimately, one would like to recover the standard assumption of classical statistical physics that the equilibrium state is of (or at least close to) a Gibbs state $\omega^S \approx \ee^{-\beta\,\haH_S}$ with an inverse temperature $\beta$ and a self Hamiltonian $\haH_S$.
As in conventional statistical physics this can only be expected to be true if the coupling is weak but nonperturbative and the bath has a spectrum that gets exponentially dense for higher energies \cite{tasaki98,Goldstein06}.

\paragraph{Integrability.}
In classical mechanics integrability is a well-defined concept \cite{CI}. In quantum 
mechanics, despite the common use of the term ``integrable,'' the situation is much less clear \cite{Caux10}, and different criteria are being applied in the literature. The most common notions of integrability are the following:

(A) There exist $n$ independent (local) \cite{CI2} 
conserved mutually commuting linearly independent operators, where $n$ is the number of degrees of freedom (see, e.g., Ref.~\cite{integrable}).
In contrast to the classical situation \cite{CI}, this does not necessarily imply that the system is ``exactly solvable.''

(B) Identical with criterion (A), but with linear independence replaced by algebraic independence \cite{integrable}.

(C) The system is integrable by the Bethe ansatz \cite{integrable}.

(D) The system exhibits nondiffractive scattering \cite{integrable}.

(E) The quantum many-body system is exactly solvable in any way. 
Of course, this criterion is subject to the ambiguity of a lack of imagination of solving a given model.

\paragraph{Equilibration.}
Quantum mechanics of closed systems is time reversal invariant and thus equilibration in the usual sense is impossible.
Therefore we use an extended notion of equilibration and say that an observable $A$ equilibrates if its expectation value $\Tr[A\,\psi_t]$ is close to some value for almost all times $t$.
Of particular interest are local observables, i.e., observables that are sums of terms that each act only on small subsystems.
Saying that all observables on some subsystem $S$ equilibrate is equivalent to saying that the state $\psi^S_t$ of the subsystem equilibrates, by which we mean that there exists a state $\rho^S$ such that $\psi^S_t$ is almost always physically indistinguishable from $\rho^S$, in the sense that their trace distance $\tracedistance(\psi_t^S,\rho^S) = \frac{1}{2}\| \psi_t^S -\rho^S\|_1= \max_{0\leq A \leq\mathbb{I}} \Tr[A\,\psi^S_t] - \Tr[A\,\rho^S]$ is small for almost all times $t$.
If the expectation value $\Tr[A\,\psi_t]$ of an observable $A$ equilibrates in the sense defined above, then it must equilibrate towards its time average $\overline{\Tr[A\,\psi_t]} = \Tr[A\,\overline{\psi_t}]$.
This is an obvious but important observation.
A good understanding of the properties of the time averaged state
\begin{equation}
  \omega = \overline{\psi_t} = \lim_{\tau\to\infty} \frac{1}{\tau} \int_0^\tau \psi_t\,dt 
\end{equation}
is thus key to understanding equilibrium properties.

\paragraph{Generalized Gibbs ensemble.}
Every Hamiltonian $\haH$ defines a set of conserved observables. 
In the nondegenerate case they are exactly the linear combinations of projectors onto the eigenstates of $\haH$, 
in the degenerate case they are the observables with support on the blocks corresponding to the degenerate subspaces.
Clearly, the time average $\omega=\overline{\psi_t}$ of the state $\psi_0$  itself is given by $\omega=P(\psi_0)$, where $P(\psi_0)= \sum_j \pi_j \psi_0 \pi_j$, where $\pi_j= \sum_{k\in I_j} \ket{E_k} \bra{E_k}$ are the projections onto (possibly degenerate) eigenspaces, $E_k = E_{l}$ for $k,l\in I_j$. 
Every state $\rho$ that gives the same values for all conserved observables as $\psi_0$ satisfies $P(\psi_0) = P(\rho)$ and $\omega= P(\psi_0)$ is the state having {\it maximum entropy} among all such states.
This follows directly from the pinching inequality (Theorem~V.2.1 in Ref.~\cite{bhatia}) since the von Neumann entropy is Schur concave.
All equilibrium expectation values can be calculated from the maximum entropy state $\omega$.
This is a {\it quantum version of Jaynes'-principle} and was recently conjectured as {\it generalized Gibbs ensemble} in Ref.~\cite{Rigol07}.

Moreover, under the assumption of nondegenerate energy gaps it can be rigorously proven under which conditions equilibration (but not necessarily thermalization) happens \cite{Reimann08,Linden09}.
The certificate quantifying the quality of equilibration is the {\it effective dimension} of the time averaged state $\deff(\omega) = 1/\Tr[\omega^2]$, which, for quenches to nondegenerate Hamiltonians, is identical to the inverse of the time average of the Loschmidt echo and the inverse participation ratio (IPR) of the initial state \cite{epaps}.
The main result of Ref.~\cite{Linden09} is
\begin{equation}
  \label{eq:equilibrationtheorem}
  \overline{\tracedistance(\psi^S_t,\omega^S)} \leq \frac{1}{2} \sqrt{\frac{d_S^2}{\deff(\omega)}} = \eqcoef(\psi_0) ,
\end{equation}
and we call $\eqcoef(\psi_0)$ the {\it equilibration coefficient} of the initial state $\psi_0$ as it bounds the trace norm equilibration radius.

\paragraph{Main result.}
In systems that behave thermodynamically the equilibrium expectation values of local observables on small subsystems should be independent of the initial state of the subsystem.
A previous positive result in this direction was made in Ref.~\cite{Linden09} (see also Refs.~\cite{Gogolin10,PRE82}). 
Here we follow a converse approach and proof a sufficient condition for the \emph{absence of initial state independence}. 

A quantity that will play an important role in our main result is the \emph{effective entanglement in the eigenbasis}, given for a nondegenerate $\haH$ by
\begin{equation}
  \label{eq:definitionofourentaglementmeasure}
  \basisentanglement(\psi_0) = \sum_k |c_k|^2 \tracedistance(\Tr_B\ketbra{E_k}{E_k},\psi^S_0) ,
\end{equation}
with $c_k = \braket{E_k}{\psi_0}$.  
This quantity is small, if most energy eigenstates either resemble locally the system's initial state $\psi^S_0$, or are globally almost orthogonal to $ \psi_0$.
As will become apparent later, this is in particular the case if the reductions of the $\ketbra{E_k}{E_k}$ are close to a basis for $S$.
This can be interpreted as a natural counterpart of the ETH \cite{EigenstateThermalization}: If ``most'' energy eigenstates have reduced states close to some $\rho^S$, then the system will relax locally to $\rho^S$.

We will now show that a small value of $R$ implies that initial state independence is not satisfied.
Remarkably, this is not a matter of time scales:
It will not only take a long time to relax; but one will rather encounter a memory for almost all times.
\begin{theorem}[Nonthermalization]
  \label{theo:mainresult}
  The physical distinguishability of the two local time averaged states $\omega^{S(1)}$ and $\omega^{S(2)}$ of two pure initial product states $\psi^{(i)}_0 = {\psi^{S(i)}_0} \otimes {\phi^{B(i)}_0} ,\quad i \in \{1,2\}$ evolving under a nondegenerate Hamiltonian $\haH$ is large in the sense that
  \begin{equation*}
    \tracedistance(\omega^{S(1)},\omega^{S(2)}) \geq \tracedistance(\psi^{S(1)}_0,\psi^{S(2)}_0) -  \basisentanglement(\psi_0^{(1)}) - \basisentanglement(\psi_0^{(2)}) .
  \end{equation*}
  In the degenerate case, the quantity $R$ has to be replaced by
\begin{equation*}
  \basisentanglement(\psi_0)= \sum_k \langle \psi_0| \pi_k |\psi_0\rangle
  \tracedistance\biggl(
  \frac{\Tr_B(\pi_k \psi_0 \pi_k)}{ \langle \psi_0| \pi_k |\psi_0\rangle},\psi^S_0\biggr).
\end{equation*}
\end{theorem}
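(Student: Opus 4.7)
The plan is to reduce the theorem to a single convexity bound plus one application of the triangle inequality for the trace distance on $S$. Concretely, I will first show that the time averaged reduced state $\omega^S$ cannot be farther from the initial reduced state $\psi_0^S$ than $R(\psi_0)$, and then feed this into a triangle inequality for the pair $(\psi_0^{S(1)},\psi_0^{S(2)})$.

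For the nondegenerate case I use the characterization already established in the ``generalized Gibbs ensemble'' paragraph: $\omega = \sum_k |c_k|^2 \ketbra{E_k}{E_k}$ with $c_k = \braket{E_k}{\psi_0}$. Linearity of the partial trace gives
\begin{equation*}
  \omega^S = \sum_k |c_k|^2 \Tr_B \ketbra{E_k}{E_k}.
\end{equation*}
Writing the trivial identity $\psi_0^S = \sum_k |c_k|^2 \psi_0^S$ and invoking joint convexity of $\tracedistance$ yields
\begin{equation*}
  \tracedistance(\omega^S,\psi_0^S) \le \sum_k |c_k|^2\,\tracedistance\bigl(\Tr_B\ketbra{E_k}{E_k},\,\psi_0^S\bigr) = R(\psi_0).
\end{equation*}
Applying this to each of the two initial states and plugging into
\begin{equation*}
  \tracedistance(\psi_0^{S(1)},\psi_0^{S(2)}) \le \tracedistance(\psi_0^{S(1)},\omega^{S(1)}) + \tracedistance(\omega^{S(1)},\omega^{S(2)}) + \tracedistance(\omega^{S(2)},\psi_0^{S(2)})
\end{equation*}
and rearranging produces the claimed inequality.

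The degenerate case requires only one adjustment. The dephasing is replaced by the pinching $\omega = P(\psi_0) = \sum_k \pi_k \psi_0 \pi_k$, so, setting $p_k = \langle \psi_0|\pi_k|\psi_0\rangle$ and using $\sum_k p_k = 1$, the reduced state is the convex combination $\omega^S = \sum_k p_k\bigl[\Tr_B(\pi_k \psi_0 \pi_k)/p_k\bigr]$ of normalized states on $S$. The same joint-convexity argument then reproduces $\tracedistance(\omega^S,\psi_0^S) \le R(\psi_0)$ with the redefined $R$, and the rest of the proof is unchanged.

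There is essentially no technical obstacle: the entire content is the convexity bound that reinterprets $R(\psi_0)$ as an upper bound on the drift of the reduced state during equilibration, after which the result is a pure triangle-inequality rearrangement. It is worth remarking that the product-state assumption on the initial states is in fact not used in any of these inequalities; it only controls how large the right-hand side can be made in physically interesting situations, for instance when $\psi_0^{S(1)}$ and $\psi_0^{S(2)}$ are nearly orthogonal while both $R(\psi_0^{(i)})$ stay small.
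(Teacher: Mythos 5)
Your proposal is correct and follows essentially the same route as the paper: the bound $\tracedistance(\omega^S,\psi_0^S)\le R(\psi_0)$ via convexity of the trace distance over the decomposition $\omega^S=\sum_k|c_k|^2\Tr_B\ketbra{E_k}{E_k}$ (pinching in the degenerate case), followed by the three-term triangle inequality and rearrangement. Your side remark that the product-state assumption is not actually used in the inequalities is accurate and consistent with the paper's argument.
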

That is to say, subsystems remain distinguishable if they are initially well distinguishable and one has little effective entanglement in the eigenbasis.
Note also that the environment states ${\phi^{B(1)}_0}$ and ${\phi^{B(2)}_0}$ can be taken to be identical.
\begin{proof} 
  If the Hamiltonian $\haH$ is nondegenerate, $ \omega^{S(i)} = \sum_k |c^{(i)}_k|^2 \Tr_B \ketbra{E_k}{E_k}$, and thus
   \begin{align*}
    \tracedistance(\psi^S_0,\omega^S) &= \frac{1}{2} \bigl\| \psi^S_0 - \sum_k |c_k|^2 \Tr_B \ketbra{E_k}{E_k} \bigr\|_1 \\
    &\leq \sum_k |c_k|^2 \frac{1}{2} \left\|  \psi^S_0 - \Tr_B \ketbra{E_k}{E_k} \right\|_1= R(\psi_0) .
  \end{align*}
 The desired result then follows from $ \tracedistance(\psi^{S(1)}_0,\psi^{S(2)}_0) \leq \tracedistance(\psi^{S(1)}_0,\omega^{S(1)})+  \tracedistance(\omega^{S(1)},\omega^{S(2)}) + \tracedistance(\omega^{S(2)},\psi^{S(2)}_0)$. In the degenerate case, the same argument can be followed for the projectors $\pi_k$ onto the respective eigenspaces.
\end{proof}
The intuition that $R$ will be small when the $\ket{E_k}$ have certain properties can be made rigorous in the case where the $\Tr_B \ketbra{E_k}{E_k}$ are close to a basis of $S$.
In this case we can show that there exist many initial states for the bath that lead to a small $R$ and a large effective dimension at the same time, thus causing ``equilibration without thermalization.''
This is shown using Haar-measure averages, from which the existence follows \cite{Remark2}.

\begin{theorem}[Entanglement in eigenbasis]
  \label{theo:bounds}
  For every orthonormal basis $\{\ket{i}\}$ for $S$ and every initial product state with $\psi^S_0 = \ketbra{i}{i}$ for some $i$ and with Haar random initial bath part $\phi^B_0$, the effective entanglement in the eigenbasis for nondegenerate $\haH$ is on average upper bounded by
  \begin{equation*}
   {\mathbb E}_{\phi^B_0}  \basisentanglement(\psi^S_0 \otimes \phi^B_0) \leq 2\,\delta\,d_S
  \end{equation*}
  where $\delta = \max_{k} \delta_k$ with $\delta_k = \min_i \tracedistance(\Tr_B \ketbra{E_k}{E_k},\ketbra{i}{i})$ being the geometric measure of entanglement of the eigenstate $\ket{E_k}$ with respect to the basis $\{\ket{i}\}$.
\end{theorem}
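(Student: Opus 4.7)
The plan is to compute the Haar average of each term in the definition of $\basisentanglement$ and then split the sum over energy eigenstates according to which basis vector $\ket{i_k}$ is closest (in trace distance) to the reduction $T_k:=\Tr_B\ketbra{E_k}{E_k}$. Writing $\ket{E_k}=\sum_j \ket{j}\otimes\ket{w_{k,j}}$ in the bipartition induced by $\{\ket{i}\}$, one has $c_k=\braket{E_k}{\psi_0}=\braket{w_{k,i}}{\phi^B_0}$ and $\bra{i}T_k\ket{i}=\braket{w_{k,i}}{w_{k,i}}$; since $\mathbb{E}_{\phi^B_0}[\ketbra{\phi^B_0}{\phi^B_0}]=\id/d_B$ by unitary invariance of the Haar measure, this yields
\begin{equation*}
  \mathbb{E}_{\phi^B_0}|c_k|^2=\braket{w_{k,i}}{w_{k,i}}/d_B=\bra{i}T_k\ket{i}/d_B.
\end{equation*}

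Two elementary facts then do most of the work. Completeness $\sum_k \ketbra{E_k}{E_k}=\id$ implies $\sum_k \bra{i}T_k\ket{i}=d_B$. And applying the variational formula $\tracedistance(\rho,\sigma)=\max_{0\leq A\leq\id}\Tr[A(\rho-\sigma)]$ (already used in the paper) to the projector $A=\ketbra{i}{i}$ and to $\rho=T_k$, $\sigma=\ketbra{i_k}{i_k}$ gives $\bra{i}T_k\ket{i}\leq|\braket{i}{i_k}|^2+\delta_k$, so that $\bra{i}T_k\ket{i}\leq\delta$ whenever $i\neq i_k$, using that $\{\ket{i}\}$ is orthonormal.

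With these ingredients I would partition $\{1,\dots,d\}$ into $K_i:=\{k:i_k=i\}$ and its complement. On $K_i$ the trace-distance factor equals $\delta_k\leq\delta$, while the prefactors $\bra{i}T_k\ket{i}$ sum to at most $d_B$, contributing at most $\delta$ to $\mathbb{E}_{\phi^B_0}\basisentanglement$. Outside $K_i$ the prefactor itself is at most $\delta$ by the estimate above, the trace distance is trivially at most $1$, and there are at most $d$ such terms, contributing at most $\delta d/d_B=\delta d_S$. Summing gives $\mathbb{E}_{\phi^B_0}\basisentanglement\leq\delta(1+d_S)\leq 2\delta d_S$.

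I expect the main obstacle to be the off-diagonal bound $\bra{i}T_k\ket{i}\leq\delta$ for $i\neq i_k$: it is the only place the geometric-entanglement hypothesis enters, and it relies both on the variational characterisation of the trace distance and on the orthonormality of the chosen basis (so that $\braket{i}{i_k}=0$ exactly). Everything else is the Haar first-moment calculation together with simple bookkeeping of the two sums.
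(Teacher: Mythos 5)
Your proof is correct, and it reaches the conclusion by a somewhat different route than the paper. The paper does not partition the sum: it bounds each summand uniformly by first using the fidelity-type inequality $\Tr[\Tr_B\ketbra{E_k}{E_k}\,\psi^S_0]\leq 1-\tracedistance(\Tr_B\ketbra{E_k}{E_k},\psi^S_0)^2$ to write $|c_k|^2$ as a normalized weight $|c_k|^2/\Tr[\Tr_B\ketbra{E_k}{E_k}\,\psi^S_0]$ times $(1-\tracedistance^2)$, then invoking the pointwise estimate $(1-\tracedistance^2)\,\tracedistance\leq 2\delta$, and finally Haar-averaging the normalized weights, each of which contributes $1/d_B$, so that the $d$ terms sum to $d_S$. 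That pointwise estimate secretly encodes the same dichotomy you make explicit: either $\ket{i}$ is the nearest basis vector to $\Tr_B\ketbra{E_k}{E_k}$, in which case $\tracedistance\leq\delta$, or it is not, in which case orthonormality forces $\tracedistance\geq 1-\delta$ and hence $1-\tracedistance^2\leq 2\delta$. Your version instead computes $\mathbb{E}_{\phi^B_0}|c_k|^2=\bra{i}\Tr_B\ketbra{E_k}{E_k}\ket{i}/d_B$ up front and controls the "aligned" block via completeness $\sum_k\Tr_B\ketbra{E_k}{E_k}=d_B\,\id$ and the "misaligned" block via the variational bound $\bra{i}\Tr_B\ketbra{E_k}{E_k}\ket{i}\leq\delta$. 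What each buys: your argument avoids dividing by the possibly vanishing overlaps $\Tr[\Tr_B\ketbra{E_k}{E_k}\,\psi^S_0]$ (the paper must restrict to "nonzero $|c_k|^2$" for this reason) and yields the marginally sharper constant $\delta(1+d_S)$; the paper's argument is more compact, needing only a single one-line inequality per term rather than bookkeeping over two index sets.
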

\begin{proof}
  Note that $\Tr[\Tr_B\ketbra{E_k}{E_k} \psi^S_0 ]  
  \leq 1 - \tracedistance(\Tr_B\ketbra{E_k}{E_k},\psi^S_0)^2$ and thus all nonzero $|c_k|^2$ in Eq.\ \eqref{eq:definitionofourentaglementmeasure} can be upper bounded by
  \begin{equation*}
    \frac{\Tr[\ketbra{E_k}{E_k} (\psi^S_0 \otimes \phi^B_0 )]}{\Tr[\Tr_B\ketbra{E_k}{E_k} \psi^S_0 ]} (1 - \tracedistance(\Tr_B\ketbra{E_k}{E_k},\psi^S_0)^2) .
  \end{equation*}
  As $(1 - \tracedistance(\Tr_B\ketbra{E_k}{E_k},\psi^S_0)^2) \tracedistance(\Tr_B\ketbra{E_k}{E_k},\psi^S_0) \leq 2\,\delta$, we have
  \begin{equation*}
    \basisentanglement(\psi_0) \leq 2\,\delta \sum_k \frac{\Tr[\ketbra{E_k}{E_k} (\psi^S_0 \otimes \phi^B_0 )]}{\Tr[\Tr_B\ketbra{E_k}{E_k} \psi^S_0 ]} .
  \end{equation*}
  Averaging over all pure states $\phi_0^B$ gives the mean $\id/d_B$ and the sum in the last line is thus upper bounded by $d_S$.
\end{proof}
Note that Theorem~\ref{theo:bounds} implies that whenever a basis $\{\ket{i}\}$ exists for which $\delta$ is small, then for every $i$ there exist many bath states $\phi^B_0$ such that $\basisentanglement(\ketbra{i}{i} \otimes \phi^B_0) \leq 2\,\delta\,d_S$ \cite{Remark2}.
Furthermore, almost all of them will lead to a high effective dimension \cite{Linden09,Gogolin10}.
Obviously the argument can be further strengthened by maximizing $\delta$ only over a subspace that contains most of the probability weight of $\psi_0$: This allows some of the $\delta_k$ to be large, as long as the corresponding $|c_k|^2$ are small.

\begin{figure}[tb]
  \centering%
  \includegraphics[width=\linewidth]{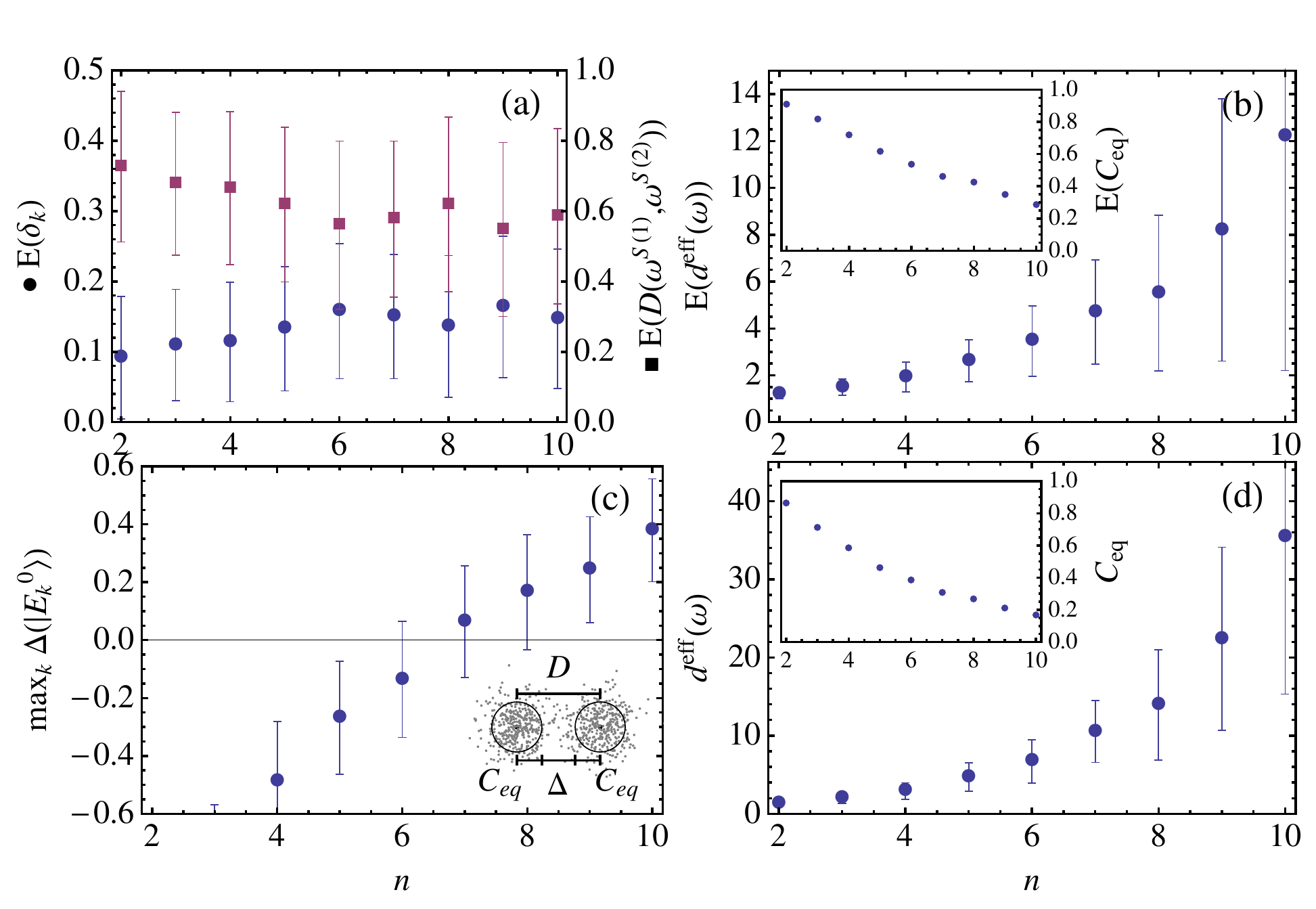}
  \caption{(color online)
    The subsystem is taken to be the first site $S=1$ in the spin chain of $n$ sites, other choices give qualitatively the same results.
    For each of the product eigenvectors $\ket{E^0_k}$ of $\haH_0$ we compare the equilibration properties of $\psi_0^{(1)} = \ket{E^0_k}$ with that of $\psi_0^{(2)} = \sigma^X_S \ket{E^0_k}$ (i.e., the same state but with the first spin flipped) under the dynamics of $\haH$ with $\sigma_1 = 0.4\,\sigma_0$.
    Panels (a), (b) display averages over eigenstates:
    (a)~Average geometric measure of entanglement $\mathbb{E}(\delta_k)$ with respect to the $\haH_0$ eigenbasis and average distance of the reduced time averaged states $\mathbb{E}(D(\omega^{S(1)},\omega^{S(2)}))$. 
    (b)~Average effective dimension and equilibration coefficient [see Eq.~\eqref{eq:equilibrationtheorem}].
    Panels (c), (d) show quantities optimized over eigenstates:
    (c)~Maximum distinguishability $\max_k \Delta(\ket{E_k^{(0)}})$ where $\Delta(\ket{E_k^{(0)}}) = \tracedistance(\omega^{S(1)},\omega^{S(2)}) - \eqcoef(\psi^{(1)}_0) - \eqcoef(\psi^{(2)}_0)$ ($\Delta>0$ ensures distinguishability for almost all times. See the inset for an artists impression.).
    (d)~Effective dimension and equilibration coefficient of the state maximizing $\Delta(\ket{E_k^{(0)}})$.
    All quantities have been averaged over 100 samples from the random Hamiltonian ensemble \eqref{eq:modelhamiltonian}.
    The error bars represent the standard deviation.
    $\deff$ increases rapidly with $n$: hence, equilibration gets better, while the time averaged states on $S$ remain well distinguishable.
    Remarkably the observable that best distinguishes the equilibrium states is $\sigma^Z_S$.
    We find ``equilibration without thermalization'' for a very natural observable.
  } 
  \label{fig:nonintegrableplots}
\end{figure}

%%%%%%%%%%%%%%%%%%%%%%%%%%%%%%%%%%%%%%%%%%

\paragraph{Application to a nonintegrable model.}
The model we consider is a spin-$1/2$ XYZ chain with $n$ sites with random coupling and on-site field.
The Hamiltonian is
\begin{align}
  \label{eq:modelhamiltonian}
  \haH &= \haH_0 + \haH_1 = \sum_{i=1}^n h_i\,\sigma^Z_i + \sum_{i=1}^{n-1} \vec{b}_i \cdot \vec{\sigma}^{\mathrm{NN}}_i ,
\end{align}
where $\vec{\sigma}^{\mathrm{NN}}_i = (\sigma^X_i\sigma^X_{i+1},\sigma^Y_i\sigma^Y_{i+1},\sigma^Z_i\sigma^Z_{i+1})^T$ in terms of the Pauli matrices at site $i$, and $h_i$ and the components of $\vec{b}_i$ are i.i.d.\ normal distributed random variables with zero mean and standard deviations $\sigma_0$ and $\sigma_1$, respectively.
The model is closely related to the one studied in Ref.~\cite{Pal10}.
With unit probability the Hamiltonian is nondegenerate and has nondegenerate gaps. 
We investigate the equilibration properties of the eigenstates of $\haH_0$ after a quench to $\haH$ via exact diagonalization for $\sigma_1 = 0.4\,\sigma_0$.
Hence the integrability breaking term $\haH_1$ is far from being a small perturbation and $\haH$ is nonintegrable in the sense of all of the aforementioned definitions of integrability.
According to the widely accepted belief \cite{kollschollmanm,Rigol08,BanulsCiracHastings10}, one would therefore expect to find thermalization.
However, the numerics suggests that after the quench all local observables \emph{equilibrate}, but \emph{retain memory on the initial conditions} and thus initial state independence is violated.
This conclusion is reached not by keeping track of time evolution, but rather by checking the conditions of Theorem~\ref{theo:mainresult} (see Fig.~\ref{fig:nonintegrableplots}).
It is a challenge to construct nonintegrable models without disorder that violate initial state independence.

\paragraph{Summary and conclusions.}
We have established rigorous results that identified a lack of entanglement in the energy eigenbasis as the reason for an ``equilibration without thermalization'' phenomenon: all local observables equilibrate but retain memory on their initial values for infinitely long time.
By considering a particular model we exemplify that such approximately conserved quantities can exist even in nonintegrable models.
Such models may not saturate Lieb-Robinson bounds, i.e., there probably is no ballistic propagation of information.
Certainly, interesting physical candidates for such models are to be found in disordered systems:
The Anderson model for example has eigenfunctions that are exponentially localized with high probability \cite{Disorder}.
It is the hope that this work stimulates further research on this connection.

\paragraph{Acknowledgments.}
We thank C.\ Neuenhahn and M.\ B.\ Hastings for inspiring discussions. We thank 
the German Academic Society, the EU (Qessence, Compas, Minos), and the 
EURYI for support.

%%%% Bibliography%%%%%%%%%%%%%%%%%%%%%%%%%%%%%%%%%%%%%%%
%  \bibliography{bibliography}
%  \bibliographystyle{apsrev}

\section*{EPAPS}

\subsection*{Connection between effective dimension, Loschmidt echo and inverse participation ratio}
The effective dimension of the time averaged state, the time 
average of the Loschmidt echo and the inverse participation ratio of the initial state are closely related in the non-degenerate case.
The inverse participation ratio ($ \mathrm{IPR}$) 
of a state $\psi_0$ is defined to be
\begin{equation*}
  \mathrm{IPR}(\psi_0) = \sum_k |c_k|^4 ,
\end{equation*}
where $c_k = \braket{E_k}{\psi_0}$ are the overlaps with the energy eigenvectors $\{\ket{E_k}\}$ of the Hamiltonian.
The effective dimension is the inverse purity of the time averaged state
\begin{equation*}
  \deff(\omega) = \frac{1}{\Tr[\omega^2]} ,
\end{equation*}
where
\begin{equation*}
  \omega = \overline{\psi_t} = \sum_{k,l} \overline{\ee^{-\iu(E_k - E_l) t}} c_k c_l^* \ketbra{E_k}{E_l}.
\end{equation*}
The Loschmidt echo (see Ref.~\cite{Gritsev10} and the references therein) is defined to be the overlap between the initial state and the 
state at time $t$ 
\begin{align}
  \mathcal{L}_t &= | \braket{\psi_0}{\psi_t} |^2 = |  \bra{\psi_0} \ee^{-\iu\,H\,t} \ket{\psi_0} |^2
  \nonumber\\
  &= \sum_{k,l} \ee^{-\iu(E_k - E_l) t} |c_k|^2 |c_l|^2 .\nonumber
\end{align}
As
\begin{equation*}
  \overline{\ee^{-\iu(E_k - E_l) t}} = \delta_{E_k , E_l}
\end{equation*}
we have in the non-degenerate case that
\begin{equation*}
  \frac{1}{\deff} = \overline{\mathcal{L}_t} = \mathrm{IPR}(\psi_0) = \sum_k |c_k|^4.
\end{equation*}
Due to the ambiguity in the definition of the eigenbasis,
the IPR is not well defined in the degenerate case, and 
\begin{equation*}
  \frac{1}{\deff} = \overline{\mathcal{L}_t} = \sum_{k,l} \delta_{E_k,E_l} |c_k|^2 |c_l|^2 .
\end{equation*}

\end{document}